\documentclass[journal]{IEEEtran}%
\usepackage{mathrsfs}
\usepackage[dvipsnames,usenames]{color}
\usepackage{amsmath,amsthm}
\usepackage{amssymb}
\usepackage{graphicx}
\usepackage{subfigure}
\usepackage{times}
\usepackage{graphics,color}
\usepackage{lipsum}
\newtheorem{theorem}{Theorem}
\newtheorem{lemma}{Lemma}

\newtheorem{remark}{Remark}
\newcommand{\onetom}{1,\cdots,m}

\newcommand{\R}{\mathbb R}

\title{Synchronization in Networks of Linearly Coupled Dynamical
Systems via Event-triggered Diffusions
\thanks{This work is jointly supported by the National Natural Sciences
Foundation of China under Grant Nos. 61273211 and 61273309, the
Marie Curie International Incoming Fellowship from the European Commission
(302421), and the Program for New Century Excellent Talents
in University (NCET-13-0139).} }
\author{Wenlian
Lu,~\IEEEmembership{Member,~IEEE}, Yujuan Han, Tianping
Chen,~~\IEEEmembership{Senior~Member,~IEEE}
\thanks{W. Lu is with the Centre for Computational Systems Biology and the School of Mathematical Sciences, Fudan University,
Shanghai 200433, China, and also with the Department of Computer
Science, the University of Warwick, Coventry CV4 7AL, United Kingdom. (e-mail: wenlian@fudan.edu.cn). }
\thanks{
Y. Han is with the College of Information Engineering, Shanghai Maritime
University, Shanghai, China. (e-mail: yjhan@shmtu.edu.cn).  }
\thanks{T. Chen is with  the School of Mathematical Sciences, Fudan University,
Shanghai, China, and also with the School of Computer Science, Fudan
University, Shanghai, China. (e-mail: tchen@fudan.edu.cn).
}}
\date{}

\begin{document}
\maketitle
\begin{abstract}
In this paper, we utilize event-triggered coupling configuration to realize
synchronization of linearly coupled dynamical systems. Here, the diffusion
couplings are set up from the latest observations of the nodes of its
neighborhood and the next observation time is triggered by the proposed
criteria based on the local neighborhood information as well. Two
scenarios are considered: continuous monitoring, that each node can
observe its neighborhood's instantaneous states, and discrete monitoring,
that each node can only obtain its neighborhood's states at the same
time point when the coupling term is triggered. In both cases,
we prove that if the system with persistent coupling can synchronize, then
these event-trigger coupling strategies can synchronize the system, too.
\end{abstract}
\begin{IEEEkeywords}
Linearly coupled dynamical systems, Synchronization,
event-triggered diffusions, Continuous and discrete monitoring.
\end{IEEEkeywords}
\section{Introduction}
\IEEEPARstart{S}{ynchronization} of coupled dynamical systems have been widely studied over
the past decades \cite{Wu}-\cite{Zhang2}, which can be
characterized by that all oscillators approach to a uniform dynamical
behavior and generally assured by the couplings among nodes and/or
external distributed and cooperative control.

In most existing works on linearly coupled dynamical systems, each node
needs to gather its own state and neighbors states and update them
spontaneously or in a fixed sampling rate, which may cost much. In order to
reduce the sampling rate of the coupling between nodes, specific
discretization is necessary. As pointed out by \cite{Astrom}, event-based
sampling was proved to possess better performance than sampling
periodically in time. Hence, \cite{Mazo}-\cite{Wang11} suggested that the
event-based control algorithms to reduce communication and computation load
in networked coupled systems. And \cite{Wang11}-\cite{HZ} showed that the
event-based control maintains the control performance. Event-based control
seems to be suitable for coupled dynamical systems with limited resources
and many works addressed the event-triggered algorithms. \cite{Dimarogonas}
considered centralized formulation, distributed formulation event-driven
strategies for multi-agent systems and proposed a self-triggered setup,
by which continuous measuring of the neighbor states can be avoided.
\cite{Johannesson}-\cite{Rabi} studied the stochastic event-driven
strategies. \cite{Seyboth} introduced event-based control strategies for both networks of
single-integrators with time-delay in communication links and networks of
double-integrators. By using scattering
transformation, \cite{YuHan} investigated the output synchronization
problem of multi-agent systems with event-driven communication in the
presence of constant communication delays. In some cases, event-driven strategies
for multi-agent systems can be regarded as linearization and discretization
process. For example, as mentioned in \cite{LC2004,LC2007}, the following
algorithm
\begin{align}
x^{i}(t+1)=f(x^{i}(t))+c_{i}\sum_{j=1}^{m}a_{ij}(f(x^{j}(t)))\label{cml}
\end{align}
can be a variant of the event triggering (distributed, self triggered) model for
consensus problem. In centralized control, the bound for
$(t_{k+1}^{i}-t_{k}^{i})=(t_{k+1}-t_{k})$ to reach synchronization was
given in the paper \cite{LC2004} when the coupling graph is indirected and
in \cite{LC2007} for the directed coupling graph.

Motivated by these works, we apply the idea of event-triggered sampling
scheme to the coupling configurations to realize synchronization of linearly coupled
dynamical systems. Here, for each node, the coupling term is set
up from the information of its local neighborhood at the last event time
and the event is triggered by some criteria derived from the information of
its local neighborhood. That is, once the triggering rule of node is
satisfied, the coupling term of this node is updated. Thus, the coupling terms are piece-wise constant
between two neighboring event times. We consider two scenarios:
continuous monitoring and discrete monitoring. Continuous monitoring means
that each node can observe its neighborhood's instantaneous information,
but discrete monitoring means that each node can only obtain the its
neighborhood's information at this event triggered time. As a payoff for
small cost of discrete monitoring, the triggering events happen more
frequently than continuous-time monitoring. For each scenario, it is shown
that the proposed event-triggered strategies guarantee the performances of
the nominal systems.

This paper is organized as follows. In sec. II, we propose event-trigger
coupling strategies to guarantee synchronization by employing continuous
monitoring. In sec. III, we consider discrete monitoring. Simulations are
given in sec. IV to verify the theoretical results. We conclude this paper
in sec. V.

\section{Continuous-time Monitoring}\label{2.1}

We consider the following network of coupled dynamical systems with piece-wise constant linear couplings:
\begin{eqnarray}
\frac{dx^{i}(t)}{dt}=f(x^{i}(t))-c\sum_{j=1}^{m}L_{ij}\Gamma(x^{j}(t_{k}^{i})-x^{i}(t_{k}^{i})),\nonumber\\
~~t_{k}^{i}\le t<t_{k+1}^{i},~i=\onetom.\label{cds}
\end{eqnarray}
Here, $x^{i}(t)\in\R^{n}$ denotes the state vector of node $i$, the
continuous map $f(\cdot):\R^{m}\to\R^{m}$ denotes the identical node
dynamics if there is no coupling. $L=[L_{ij}]_{i,j=1}^{m}\in\R^{m,m}$ is
the Laplacian matrix of the underlying bi-graph $\mathcal G=\{V,E\}$, with
the node set $V$ and link set $E$: for each pair of nodes $i\ne j$,
$L_{ij}=-1$ if $i$ is linked to $j$ otherwise $L_{ij}=0$, and
$L_{ii}=-\sum_{j=1}^{m}L_{ij}$; the graph that we consider in this paper is undirected and connected,
so $L$ is irreducible and symmetric. $c$ is the uniform coupling strength
at all nodes, and $\Gamma\in\R^{n,n}$ denotes the inner configuration matrix.
Let $0=\lambda_{1}(L)<\lambda_{2}(L)\le\cdots\le\lambda_{m}(L)$ be the
eigenvalues of $L$ with counting the multiplicities.

The increasing triggering event time sequence
$\{t_{k}^{i}\}_{k=1}^{\infty}$ (to be defined) are node-wise for
$i=\onetom$. At time $t$, each node $i$ collects its neighbor's state with
respect to an identical time point $t_{k_{i}(t)}^{i}$ with
$k_{i}(t)=arg\max_{k'}\{t^{i}_{k'}\le t\}$.

For the node dynamics map $f$, we suppose it belong to some map class
$Quad(P,\alpha\Gamma,\beta)$ for some positive definite matrix
$P\in\R^{n,n}$, constant $\alpha\in\R$ and positive constant $\beta>0$,
i.e.,
\begin{align}
\nonumber&(u-v)^{\top}P\bigg[f(u)-f(v)-\alpha\Gamma(u-v)\bigg]\\
&\le-\beta(u-v)^{\top}P(u-v)\label{quad}
\end{align}
holds for all $u,v\in\R^{n}$. In fact, we do not need this $Quad$ condition
(\ref{quad}) holds for all $u,v\in\R^{n}$ but for a region
$\Omega\subset\R^{n}$, which is assumed to contain a global attractors of
the coupling systems (\ref{cds}).

We highlight the basic idea behind the setup of the coupling term above. Instead of using the spontaneous state from the neighborhood to
realize synchronize, an economic alternative for the node $i$ is to use the
neighbor's constant states at the nearest time point $t^{i}_{k}$ until some
pre-defined event is triggered at time $t^{i}_{k+1}$; then the incoming
neighbor's information is updated by the states at $t^{i}_{k+1}$ until the
next event is triggered, and so on. The event is defined based on the
neighbor's and its own states with some prescribed rule. This process goes
on through all nodes in a parallel fashion.

To depict the event that triggers the next coupling time point, we
introduce the following candidate Lyapunov function:
\begin{align}
\nonumber V(t)& =-\frac{1}{2}\sum_{i=1}^{m}
\sum_{j=1}^{m}L_{ij}(x^{i}(t)-x^{j}(t))^{\top}P(x^{i}(t)-x^{j}(t))\\
&=x^{\top}(t)(L\otimes P)x(t)
\label{V}
\end{align}
with $x=[{x^{1}}^{\top}(t),\cdots,{x^{m}}^{\top}(t)]^{\top}\in\R^{nm}$ and
$\otimes$ represents the Kronecker product. For a compact expression, we
denote $F(x)=[f(x^{1})^{\top},\cdots,f(x^{m})^{\top}]^{\top}$. Then, the
derivative of $V(t)$ along (\ref{cds}) is
\begin{align}
\nonumber\frac{d}{dt}V(t)\left|_{(\ref{cds})}\right.=&2x^{\top}(t)(L\otimes P)\bigg\{F(x(t))-\alpha (I_m\otimes\Gamma)x(t)\\
&-[(c L-\alpha I_m)\otimes\Gamma] x(t)+ce(t)\bigg\}
\label{dV1}
\end{align}
where $e(t)=[e_1^{\top}(t),\cdots,e^{\top}_m(t)]^{\top}$ and
$$e_i(t)=\sum_{j}L_{ij}\Gamma\left(x^j(t)-x^j(t_{k_i(t)}^i)-x^i(t)+x^i(t_{k_i(t)}^i)\right).$$
By assuming $f\in Quad(P,\alpha\Gamma,\beta)$, we have
\begin{align}\label{a}
&\frac{d}{dt}V(t)\left|_{(\ref{cds})}\right.\nonumber\\
\le&-2\beta x^{\top}(L\otimes P)x-2x^{\top}[L(c L-\alpha I_{m})\otimes(P\Gamma)]x\nonumber\\
&+2cx^{\top}(L\otimes P)e\nonumber\\
\le&-2\beta' x^{\top}(L\otimes P)x-2(\beta-\beta')x^{\top}(L\otimes P)x\nonumber\\
&-2(c\lambda_{2}(L)-\alpha)x^{\top}[L\otimes(P\Gamma)]x+2cx^{\top}(L\otimes P)e
\end{align}
with any $0<\beta'<\beta$. Pick a constant $\upsilon>0$, then
\begin{align}\label{b}
\nonumber2x^{\top}(L\otimes P)e&\leq {\upsilon}x^{\top}(L^2\otimes P^2)x+\frac{1}{\upsilon} e^{\top}e\\
&\leq {\upsilon\lambda_m(L)\lambda_m(P)}x^{\top}(L\otimes P)x+\frac{1}{\upsilon}e^{\top}e.
\end{align}
Substitute inequality (\ref{b}) into (\ref{a}), we have
\begin{align}\label{c}
\nonumber \frac{d}{dt}V(t)\left|_{(\ref{cds})}\right.&\le -2(\beta-\beta')V(t)
-2(c\lambda_{2}(L)-\alpha)V(t)\\
&+[-2\beta'+
{c\upsilon\lambda_m(L)\lambda_m(P)}]V(t)+\frac{c}{\upsilon}e^{\top}e.
\end{align}
Denote $$z_i(t)=\sqrt{\sum_{j\neq
i}(-L_{ij})(x^j(t)-x^i(t))^{\top}P(x^j(t)-x^i(t))}.$$ Then, we have
$V(t)=\sum_{i=1}^m z_i^2(t)$.

Denote $\|\cdot\|$ the Euclidean norm, i.e., for any vector $\xi\in\mathbb
R^n$, $\|\xi\|=\sqrt{\xi_1^2+\cdots+\xi_n^2}$. For a matrix $A\in\mathbb
R^{n\times n}$, the spectral norm of $A$ is induced from Euclidean norm,
i.e., $\|A\|=\sqrt{\lambda_{max}(A^{\top}A)}$. Hence, $\|A\xi\|\leq
\|A\|\|\xi\|$ always holds, which will be used later as default. Moreover,
$\|x\|_{P}=\sqrt{x^{\top}Px}$ for some positive definite matrix
$P\in\R^{n,n}$. Thus, we have the following theorem
\begin{theorem}\label{thm1}
Suppose that $f\in Quad(P,\alpha\Gamma,\beta)$ with
positive matrix $P$ and $\beta>0$, $c\lambda_{2}(L)>\alpha$ and
$P\Gamma$ is semi-positive definite. Pick $\beta'<\beta$. Then either one of
the following two updating rules can guarantee that system (\ref{cds})
synchronizes:
\begin{enumerate}
\item Set $t^{i}_{k+1}$ as the time point by the rule
\begin{align}
t_{k+1}^{i}=&\max_{\tau}\left\{\tau\ge t_{k}^{i}:\right.\nonumber\\
&\left.~\|e_i(\tau)\|
\le\frac{\beta'}{c\sqrt{\lambda_m(L)\lambda_m(P)}}z_i(\tau)\right\};
\label{event1}
\end{align}

\item Set $t^{i}_{k+1}$ as the time point by the rule
\begin{eqnarray}
t_{k+1}^{i}=\max\left\{\tau\ge t_{k}^{i}:~\|e_i(\tau)\|\le a\exp{(-b\tau)}\right\}.
\label{event2}
\end{eqnarray}
\end{enumerate}
\end{theorem}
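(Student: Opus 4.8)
The plan is to start from the differential inequality (\ref{c}), which already encodes the $Quad$ assumption and the Young-type splitting (\ref{b}), and to show that under either triggering rule its right-hand side becomes a negative multiple of $V(t)$ up to a term that vanishes as $t\to\infty$. Since $L$ is the Laplacian of a connected graph and $P$ is positive definite, $V$ vanishes exactly on the synchronization manifold, so establishing $V(t)\to 0$ is equivalent to proving synchronization. The whole argument thus reduces to bounding the error term $\frac{c}{\upsilon}e^{\top}e=\frac{c}{\upsilon}\sum_{i=1}^{m}\|e_i\|^2$ by means of the triggering condition and then choosing $\upsilon$ so that the coefficient of $V$ in (\ref{c}) is strictly negative.

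For rule (\ref{event1}) I would first note that, because node $i$ resets exactly when its condition is about to fail, the inequality $\|e_i(t)\|\le \frac{\beta'}{c\sqrt{\lambda_m(L)\lambda_m(P)}}z_i(t)$ holds for every $t\in[t_k^i,t_{k+1}^i)$, and hence for every node at every time. Squaring, summing over $i$, and using $V=\sum_i z_i^2$ gives $e^{\top}e\le \frac{\beta'^2}{c^2\lambda_m(L)\lambda_m(P)}V$, so that $\frac{c}{\upsilon}e^{\top}e\le \frac{\beta'^2}{c\upsilon\lambda_m(L)\lambda_m(P)}V$. Writing $\mu=c\upsilon\lambda_m(L)\lambda_m(P)$, the bracketed term of (\ref{c}) together with this bound contributes $(-2\beta'+\mu+\beta'^2/\mu)V$, which by the arithmetic--geometric mean inequality is minimized at $\mu=\beta'$, i.e. $\upsilon=\beta'/(c\lambda_m(L)\lambda_m(P))$, where it equals $0$. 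With this choice (\ref{c}) collapses to $\dot V\le -2[(\beta-\beta')+(c\lambda_2(L)-\alpha)]V$, and since $\beta'<\beta$ and $c\lambda_2(L)>\alpha$ the bracket is strictly positive; the Gr\"onwall estimate then yields $V(t)\le V(0)e^{-2\eta t}$ with $\eta=(\beta-\beta')+(c\lambda_2(L)-\alpha)>0$.

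For rule (\ref{event2}) the threshold is state-independent, so between events $\|e_i(t)\|\le a e^{-bt}$ and hence $e^{\top}e\le m a^2 e^{-2bt}$. Substituting into (\ref{c}) and collecting the coefficient $-2\beta-2(c\lambda_2(L)-\alpha)+c\upsilon\lambda_m(L)\lambda_m(P)$, I would pick $\upsilon$ small enough that this equals $-\gamma$ with $\gamma>0$, which is possible because $2\beta+2(c\lambda_2(L)-\alpha)>0$, giving $\dot V\le -\gamma V+\frac{cma^2}{\upsilon}e^{-2bt}$. Applying the comparison lemma (variation of constants) then bounds $V(t)$ by a combination of $e^{-\gamma t}$ and $e^{-2bt}$, or by $te^{-\gamma t}$ in the resonant case $\gamma=2b$, all of which tend to $0$; thus $V(t)\to 0$ and the system synchronizes.

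The step I expect to be the genuine obstacle is not the Lyapunov estimate above but the well-posedness of the event sequences, i.e. ruling out Zeno behaviour. Just after a reset one has $e_i=0$, while the threshold in (\ref{event1}) scales with $z_i(t)$, which itself tends to $0$ as the network synchronizes; a fully rigorous proof therefore needs a positive lower bound on the inter-event times $t_{k+1}^i-t_k^i$, or at least the absence of a finite accumulation point. I would obtain this by differentiating $\|e_i(t)\|$, bounding $\frac{d}{dt}\|e_i(t)\|$ in terms of $f$, $\Gamma$ and the local states, and comparing its growth immediately after a reset against the decay of the threshold, namely against $\beta' z_i/(c\sqrt{\lambda_m(L)\lambda_m(P)})$ for rule (\ref{event1}) and against $abe^{-bt}$ for rule (\ref{event2}). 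This is the part requiring the most care, since it is precisely where the continuity of $f$ on the attracting region $\Omega$ and the strict positivity of the design constants must be used.
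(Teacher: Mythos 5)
Your proof is correct and follows essentially the same route as the paper: both start from inequality (\ref{c}), use the triggering condition to absorb the per-node error terms $\|e_i\|^2$ into the $z_i^2$ terms with the optimizing choice $\upsilon=\beta'/(c\lambda_m(L)\lambda_m(P))$ for rule (\ref{event1}) (yielding $\dot V\le -2(\beta-\beta')V$), and for rule (\ref{event2}) choose $\upsilon$ so the coefficient of $V$ is negative and conclude by variation of constants against the forcing term $\frac{a^2cm}{\upsilon}e^{-2bt}$. The Zeno issue you flag as the "genuine obstacle" is not part of the paper's proof of Theorem \ref{thm1} at all; the paper treats the positivity of inter-event intervals separately in Theorem \ref{thm2}, under an additional Lipschitz assumption on $f$ and condition (\ref{vi}), essentially along the lines you sketch.
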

\begin{proof}
Noting that $c\lambda_{2}(L)>\alpha$ holds and $P\Gamma$ is semi-positive definite. By (\ref{c}), we have
\begin{align}\label{Vcase1}
\nonumber\frac{d}{dt}V(t)\left|_{(\ref{cds})}\right.\leq& -2(\beta-\beta')V(t)+[c\upsilon\lambda_m(L)\lambda_m(P)\\
\nonumber&-2\beta']V(t)+\frac{c}{\upsilon}\|e(t)\|^2\\
\nonumber=& -2(\beta-\beta')V(t)+\sum_{i=1}^m [c\upsilon\lambda_m(L)\lambda_m(P)\\
&-2\beta']z_i^2(t)+\sum_{i=1}^m\frac{c}{\upsilon}\|e_i(t)\|^2
\end{align}
holds for any $\upsilon>0$.

(1).
In case of
\begin{equation}
\label{est1}
\|e_i(t)\|^2\leq \frac{\upsilon}{c}[2\beta'-c\upsilon\lambda_m(L)\lambda_m(P)]z_i^2(t)
\end{equation}
for some constant $\upsilon>0$, we have
\begin{equation}
\label{dV}
\frac{d}{dt}V(t)\left|_{(\ref{cds})}\right.\leq -2(\beta-\beta')V.
\end{equation}
This implies that $V(t)$ converges to $0$ exponentially. Note
$$\max_{\upsilon>0}\frac{\upsilon}{c}[2\beta'-c\upsilon\lambda_m(L)\lambda_m(P)]
=\frac{\beta'^{2}}{c^{2}\lambda_{m}(L)\lambda_m(P)}.$$ Then, we take
$\upsilon=\frac{\beta'}{c\lambda_{m}(L)\lambda_m(P)}$, which guarantees that (\ref{event1})
holds.

(2). In case of
\begin{equation}
\label{est2}
\|e_i(t)\|\leq a\exp{(-bt)},
\end{equation}
we have
\begin{align*}
\frac{d}{dt}V(t)\left|_{(\ref{cds})}\right.\leq& -2(\beta-\beta')V+[c\upsilon\lambda_m(L)\lambda_m(P)\\
&-2\beta']V
+\frac{a^2cm}{\upsilon} \exp{(-2bt)}.
\end{align*}
Pick $\upsilon= \frac{2\beta'}{c\lambda_m(L)\lambda_m(P)}$. Then, we have
\begin{align}\label{Vcase2}
\nonumber\frac{d}{dt}V(t)\left|_{(\ref{cds})}\right.&\leq -2(\beta-\beta')V\\
&+\frac{a^2c^2m\lambda_m(L)\lambda_m(P)}{2\upsilon\beta'} \exp{(-2bt)},
\end{align}
which implies that $V(t)$ converges to $0$ exponentially.

\end{proof}
In fact, in (\ref{event1}), if $\tau=t_k^i$ but the system does not synchronize, then the left-hand term
$\|e_i(t_k^i)\|=0$ and at least one agent has positive right-hand term,
which means the next inter-event interval of one agent must be positive.
While in (\ref{event2}), if $\tau=t_k^i$ but the system does not synchronize, the left-hand term equals $0$ and
the right-hand term is positive for all nodes. Hence, the inter-event intervals of all
agents are positive. But in case the derivative of $e_i(t)$ is sufficiently
large, the inter-event interval might tends to $0$. Since the
dynamics of $e_i(t)$ is highly related to the property of $f(\cdot)$, towards a lower-bound of the inter-event intervals, we suppose that $f(\cdot)$ is
Lipschitz in the following theorem.

It should be highlighted that if $f$ is Lipschitz and $P\Gamma$ is semi-positive
definite, then $f\in Quad(P,\alpha\Gamma,\beta)$ with
$\beta=\frac{\alpha\lambda_1(P\Gamma)}{\lambda_m(P)}-\frac{\lambda_m(P)}{\lambda_1(P)}$.
Thus, we have

\begin{theorem}\label{thm2}
Suppose $f\in Quad(P,\alpha\Gamma,\beta)$ with positive matrix $P$ and
$\beta>0$, satisfies Lipschitz condition with Lipschitz constant $L_f$,
and there exists some $\sigma$ (possibly negative) such that
\begin{eqnarray}
(u-v)^{\top}P(f(u)-f(v))\ge\sigma (u-v)^{\top}P(u-v)\label{vi}
\end{eqnarray}
for all $u,v\in\mathbb R^{n}$.  $c\lambda_{2}(L)>\alpha$ and $P\Gamma$ is semi-positive
definite. For any $\beta'<\beta$, any initial condition and any time $t\leq
0$, we have
\begin{enumerate}
\item
With the updating rule (\ref{event1}), at least one agent has next
inter-event interval, which is lower-bounded by a common constant
$\tau_{O}>0$. 
in addition, if there exists $\varsigma>0$ such that $z_{i}^{2}(t)\ge
\varsigma V(t)$ for all $i=\onetom$ and $t\ge 0$, then the next inter-event
interval of every agent is strictly positive and is lower-bounded by a
common constant.
\item With the updating rule (\ref{event2}), the next inter-event interval of every
agent is strictly positive and is lower-bounded by a common constant.
\end{enumerate}
\end{theorem}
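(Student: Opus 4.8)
The plan is to bound, on each inter-event interval, the rate at which the measurement error $\|e_i(t)\|$ can grow away from its reset value, and then to compare this growth against the triggering threshold to extract a positive lower bound on $t_{k+1}^i-t_k^i$. First I would record the reset property: at $t=t_k^i$ one has $x^j(t_{k_i(t)}^i)=x^j(t_k^i)$, so $e_i(t_k^i)=0$; thus on $[t_k^i,t_{k+1}^i)$ the error starts from zero and the event fires exactly when $\|e_i\|$ first meets the threshold. Since the sampled states are frozen on this interval, differentiating gives $\dot e_i(t)=\sum_j L_{ij}\Gamma(\dot x^j(t)-\dot x^i(t))$. Using the identity $\sum_l L_{il}\Gamma(x^l(t_k^i)-x^i(t_k^i))=\sum_l L_{il}\Gamma(x^l(t)-x^i(t))-e_i(t)$, the closed-loop velocities $\dot x^i=f(x^i)-c[\sum_l L_{il}\Gamma(x^l-x^i)-e_i]$ are rewritten through current disagreements and the errors themselves.

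The key step is a homogeneous (degree-one) bound on the velocity of the error. Invoking the Lipschitz bound $\|f(x^j)-f(x^i)\|\le L_f\|x^j-x^i\|$, together with connectedness (so that every pairwise disagreement $\|x^j-x^i\|$ is controlled by $\sqrt{V}$ via $\lambda_2(L)$ and the smallest eigenvalue of $P$) and the bounds on $\|L\|$, $\|\Gamma\|$, I expect constants $\kappa_1,\kappa_2$ with
\[
\|\dot e_i(t)\|\le \kappa_1\sqrt{V(t)}+\kappa_2\|e(t)\|.
\]
On any inter-event interval the rules themselves keep the errors below the threshold: under rule (\ref{event1}) one has $\|e(t)\|^2=\sum_j\|e_j\|^2\le\theta_1^2\sum_j z_j^2=\theta_1^2 V(t)$ with $\theta_1=\beta'/(c\sqrt{\lambda_m(L)\lambda_m(P)})$, while under rule (\ref{event2}) one has $\|e(t)\|\le\sqrt{m}\,a\exp(-bt)$. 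Substituting gives $\|\dot e_i(t)\|\le \kappa_3\sqrt{V(t)}$ for rule 1, and $\|\dot e_i(t)\|\le K\exp(-\eta t)$ with $\eta=\min(\beta-\beta',b)$ for rule 2, using the exponential decay of $V$ from Theorem~\ref{thm1}. Integrating from $t_k^i$ with $e_i(t_k^i)=0$ then produces an explicit upper envelope for $\|e_i(t)\|$.

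Finally I would run a comparison argument. For rule (\ref{event2}) the event cannot fire before the envelope meets $a\exp(-bt)$; writing $\Delta=t-t_k^i$ this reduces to an inequality of the form $\Delta\,e^{b\Delta}\ge (a/K)\exp((\eta-b)t_k^i)$, whose right-hand side is bounded below by a positive constant once $b\le\beta-\beta'$ (so $\eta=b$), yielding a uniform $\tau_O>0$ for every agent. For rule (\ref{event1}) the event fires when the envelope meets $\theta_1 z_i(t)$; since $V=\sum_j z_j^2$ always admits a component with $z_{i^*}^2\ge V/m$, the threshold of that node is at least $\theta_1\sqrt{V/m}$, and comparison against $\kappa_3\sqrt{V(t_k^i)}\,\Delta$ gives a ``not-fired'' window $\Delta<\frac{\theta_1}{\kappa_3\sqrt m}\,\frac{\sqrt{V(t)}}{\sqrt{V(t_k^i)}}$. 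Here the one-sided estimate (\ref{vi}), which lower-bounds $\dot V\ge-\rho V$, is used to guarantee $\sqrt{V(t)}/\sqrt{V(t_k^i)}\ge e^{-\rho\Delta/2}$, turning the window into $\Delta<\frac{\theta_1}{\kappa_3\sqrt m}e^{-\rho\Delta/2}$, whose fixed point is the claimed $\tau_O$. If in addition $z_i^2\ge\varsigma V$ for every $i$, the same estimate holds uniformly with $\theta_1\sqrt{\varsigma}$ replacing $\theta_1/\sqrt m$, giving a common bound for every agent.

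The main obstacle I anticipate is exactly this control of $z_i(t)$ relative to $V(t)$: without the non-degeneracy hypothesis a node whose local disagreement vanishes has a threshold that shrinks at least as fast as the error can grow, so only the dominant node (the one realizing $\max_i z_i$, which always satisfies $z_{i^*}^2\ge V/m$) can be guaranteed a uniform gap. This is precisely why the general claim of part~1 is stated for ``at least one agent,'' and why the stronger conclusion requires the uniform lower bound $z_i^2\ge\varsigma V$. A secondary care point is ensuring all the proportionality constants ($\kappa_1,\kappa_2,\kappa_3,\rho,K$) are genuinely time-independent, which follows from the states remaining in the bounded attracting region $\Omega$ so that $f$ and the couplings stay bounded along trajectories.
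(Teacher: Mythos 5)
Your proposal is correct and follows essentially the same route as the paper's proof: a Lipschitz-based error envelope that grows linearly in $t-t_k^i$ and is scaled by $\sqrt{V(t_k^i)}$, the dominant-node bound $z_{i^*}^2\ge V/m$ (replaced by $z_i^2\ge\varsigma V$ under the extra hypothesis), the exponential lower bound on $V$ obtained from (\ref{vi}), and a fixed-point comparison equation defining $\tau_O$ for rule (\ref{event1}), respectively a comparison against $a\exp(-bt)$ using the decay rate $\min(b,\beta-\beta')$ for rule (\ref{event2}). Your explicit observation that uniformity of the bound for rule (\ref{event2}) requires $b\le\beta-\beta'$ is a point the paper glosses over ("it can be seen that this equation has a positive solution"), but it does not change the fact that the two arguments are the same.
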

\begin{proof}
(1).  Note
\begin{align*}
\dot{x}^j(t)-\dot{x}^i(t)=&f(x^j)-f(x^i)-c[(L\otimes\Gamma)x(t_{k_j(t)}^j)]_j\\
&+c[(L\otimes\Gamma)x(t_{k}^i)]_i,~~~~~t\in[t_k^i,t_{k+1}^i).
\end{align*}
Combining with the facts that $f$ is Lipschitz and $V(t)$ is decreasing, we have
\begin{equation}\label{deri_dis}
\|\dot{x}^j(t)-\dot{x}^i(t)\|\leq L_f\|x^j(t)-x^i(t)\|+c\|\Gamma\|V(t_k^i).
\end{equation}
According to
\begin{equation}\label{dyn_e}
e_i(t)=\sum_{j}L_{ij}\Gamma\int_{t_k^i}^t[\dot{x}^j(s)-\dot{x}^i(s)]ds
\end{equation}
and inequality (\ref{deri_dis}), we have
\begin{align}\label{e}
\|e_i(t)\|\leq
&\frac{1}{\sqrt{\lambda_1(P)}}\left(L_f\|\Gamma\|+mc\|\Gamma\|^2\right)\sqrt{2V(t_k^i)}(t-t_k^i).
\end{align}
And, noting $V(t)=\sum_{i=1}^{m}z_{i}^{2}(t)$, there exists $i_{*}$ such that
\begin{eqnarray}\label{z}
z_{i_{*}}(t)\ge\sqrt{\frac{1}{m}V(t)}.
\end{eqnarray}
From the condition (\ref{vi}), (\ref{dV1}) gives
\begin{eqnarray}
\dot{V}&\ge&2\sigma x^{\top}(t)(L\otimes P)x(t)-2cx^{\top}(t)[L^{2}\otimes(P\Gamma)^{s}]x(t)
\nonumber\\
&&+c2x^{\top}(t)(L\otimes P)e(t).\label{xx1}
\end{eqnarray}
Noting
\begin{eqnarray*}
u^{\top}[L^{2}\otimes(P\Gamma)^{s}]u\le \lambda_{m}(L)\frac{\|(P\Gamma)^{s}\|}{\lambda_{1}P)}u^{\top}[L\otimes(P)]u,
\end{eqnarray*}
for all $u\in\R^{mn}$,
\begin{eqnarray*}
2x^{\top}(L\otimes P)e&\geq& -{\upsilon}x^{\top}(L^2\otimes P^2)x-\frac{1}{\upsilon} e^{\top}e\\
&\geq& -{\upsilon\lambda_m(L)\lambda_m(P)}x^{\top}(L\otimes P)x-\frac{1}{\upsilon}e^{\top}e,
\end{eqnarray*}
and event (\ref{event1}), (\ref{xx1}) gives
\begin{eqnarray*}
\dot{V}(t)\ge \varpi V(t)
\end{eqnarray*}
for all $t$ before the next triggering time, with
\begin{eqnarray*}
\varpi&=&2\sigma-2\lambda_{m}(L)\frac{\|(P\Gamma)^{s}\|}{\lambda_{1}(P)}-\upsilon\lambda_{m}(L)
\lambda_{m}(P)\\
&&-\frac{1}{\upsilon}\frac{\beta'}{c\lambda_{m}(L)\lambda_{m}(P)}.
\end{eqnarray*}
Thus, we have $V(t)\ge V(t_{k}^{i_{*}})\exp(\varpi (t-t^{i_{*}}_{k}))$. Combined with (\ref{e}) and (\ref{z}), this implies that for each $t\le t^{i_{*}}_{k}+\tau_{O}$, where $\tau_{O}$ satisfies
\begin{eqnarray*}
\frac{\left(L_f\|\Gamma\|+mc\|\Gamma\|^2\right)\sqrt{2}}{\sqrt{\lambda_1(P)}}\tau_{O}
=\frac{\beta'\sqrt{\frac{1}{m}\exp(\varpi\tau_{O})}}{c\sqrt{\lambda_m(L)\lambda_m(P)}}
\end{eqnarray*}
we have the inequality in (\ref{event1}) holds. Therefore, we have the next triggering time should be larger than $t^{i_{*}}_{k}+\tau_{O}$.

In addition, if $z_{i}^{2}(t)\ge \varsigma V(t)$ for all $i=\onetom$ and $t\ge 0$, replace
(\ref{z}) by
\begin{eqnarray}
z_{i}(t)\ge \sqrt{\varsigma V(t)}
\end{eqnarray}
for all $i=\onetom$. Then following the same arguments after (\ref{z}), we can conclude that we have the inequality in (\ref{event1}) holds for all $t\ge t_{k}^{i}+\tau_{O'}$ with some positive $\tau_{O'}$ satisfying:
\begin{eqnarray*}
\frac{\left(L_f\|\Gamma\|+mc\|\Gamma\|^2\right)\sqrt{2}}{\sqrt{\lambda_1(P)}}\tau_{O'}
=\frac{\beta'\varsigma\sqrt{\exp(\varpi\tau_{O'})}}{c\sqrt{\lambda_m(L)\lambda_m(P)}}.
\end{eqnarray*}



(2). Under updating rule (\ref{event2}).
By inequality (\ref{Vcase2}), we get
$$
V(t)\leq \rho \exp{[-2\min(b,\beta-\beta')t]}
$$
with $\rho=V(0)+\frac{a^2c^2m\lambda_m(L)\lambda_m(P)}{4v\beta'(\beta-\beta')}+1$.  Hence, combined with (\ref{e}), this gives
\begin{align*}
\|e_i(t)\|\leq &\sqrt{\frac{2\rho}{\lambda_1(P)}} \left(L_f\|\Gamma\|+mc\|\Gamma\|^2\right)\\
&\times \exp{[-\min(b,\beta-\beta')t_k^i]}(t-t_k^i).
\end{align*}
Therefore, (\ref{event2}) is guaranteed by the following inequality
\begin{align*}
&\sqrt{\frac{2\rho}{\lambda_1(P)}} \left(L_f\|\Gamma\|+mc\|\Gamma\|^2\right)
\exp{[-\min(b,\beta-\beta')t_k^i]}(t-t_k^i)\\
& \leq a\exp{(-bt)}.
\end{align*}
Since at time $t=t_k^i$, $e_i(t)=0$ holds. Based on rule (\ref{event2}),
the next event will not trigger until $e_i(t)=a\exp{(-bt)}$. Thus, the
inter-event intervals $\tau=t_{k+1}^i-t_k^i$ is lower bounded by the
solution $\tau_D$ of the following equation
\begin{align*}
&\sqrt{\frac{2\rho}{\lambda_1(P)}} \left(L_f\|\Gamma\|+mc\|\Gamma\|^2\right)
\exp{[-\min(b,\beta-\beta')t_k^i]}\tau_D\\
&= a\exp{[-b(\tau_D+t_k^i)]}.
\end{align*}
It can be seen that this equation has a positive solution. This completes the proof.
\end{proof}
\begin{remark}
The updating rules (\ref{event1}) and (\ref{event2}) are different but
closely related to each other in some respects. It can be seen from
inequalities (\ref{Vcase1}) and (\ref{Vcase2}) used in the derivation that
the convergence behavior for (\ref{event1}) might be better than
(\ref{event2}). However, it makes rule (\ref{event1}) more complicated than
(\ref{event2}), since each agent should receive the message of the states
of its neighborhood but rule (\ref{event2}) does not need. Therefore, rule
(\ref{event1}) costs more updating times than (\ref{event2}). Moreover, as
shown by Theorem \ref{thm2}, rule (\ref{event2}) can guarantee the
positivity of the intervals to next updating time for all agents but rule
(\ref{event1}) can only guarantee it for at least one agent at each time or
for all nodes under some specific additional conditions.
\end{remark}
\section{Discrete-time Monitoring}\label{2.2}
By the discrete-time monitoring strategy, each node $i$ only needs its
local neighborhood's state at time-points $t^{i}_{k}$, $k=1,2,\cdots$. By
this way, the design of the next $t^{i}_{k+1}$ depends only on the local
states at time $t^{i}_{k}$, other than the triggering event (\ref{event1}),
(\ref{event2}), which requires the continuous time states. For early works, see \cite{LC2004,LC2007} for reference. 

Consider system (\ref{cds}) and  the candidate Lyapunov
function $V(x)$ with its derivative (\ref{dV1}). To propose a triggering criterion,
which depends only on $t_{k}^{i}$ by the criterion (\ref{event1}) in
Theorem \ref{thm1}, we need to estimate the bounds of
$(x^{q}(t)-x^{q}(t_{k}^{l}))-(x^{i}(t)-x^{i}(t_{k}^{l}))$ for any $q,i$
with $L_{iq}\neq 0$ and $x^{i}(t)-x^{j}(t)$ for any $i\ne j$.

First, we estimate the lower-bound of $x_i(t)-x_j(t)$, which satisfies
\begin{eqnarray*}
\frac{d[x^{i}(t)-x^{j}(t)]}{dt}=[f(x^{i}(t))-f(x^{j}(t))]+\theta_{i}(t_{k_i(t)}^i)-\theta_{j}(t_{k_j(t)}^j)
\end{eqnarray*}
by provided the initial values at $t^{i}_{k}$: $x^{i}(t^{i}_{k})$ and $x^{j}(t^{i}_{k})$.
This can be generalized as
\begin{eqnarray}
\begin{cases}\frac{du}{dt}=f(u(t))+\theta&u(0)=u_{0}\\
\frac{dv}{dt}=f(v(t))+\vartheta&v(0)=v_{0}.
\end{cases}\label{r2}
\end{eqnarray}

Suppose that the solutions  satisfy the following inequality:
\begin{eqnarray}
\bigg[(u(t)-v(t))^{\top}{P}(u(t)-v(t))\bigg]^{1/{2}}\geq\varrho(t,\theta,\vartheta,u_{0},v_{0})
\end{eqnarray}
Here $\varrho$ can be regarded as the lower-bound estimation of the distance (in $P$-norm) between two trajectories:
\begin{align*}
&\|u(t)-v(t)\|_{P}\\
=&\bigg\|\int_{0}^{t}[f(u(s))-f(v(s))]ds
+(\theta-\vartheta)t
-(u_0-v_0)\bigg\|_{P}.
\end{align*}

To specify $\varrho$, the celebrated Gronwall-Bellman inequality is used,
which can be verified straightforwardly and described as follows:
\begin{lemma} \cite{Gron,Bell}
For a nonnegative differentiable function $x(t)$ and two continuous functions: $\alpha(t)$ and $\beta(t)$, defined on interval $[a,b]$, if
\begin{eqnarray*}
\dot{x}\ge\alpha(t)x(t)+\beta(t)
\end{eqnarray*}
for all $t\in[a,b]$, then we have
\begin{eqnarray}
&&x(t)\ge x(a)\exp\left(\int_{a}^{t}\alpha(s)ds\right)\nonumber\\
&&+\int_{a}^{t}\beta(s)
\exp\left(\int_{s}^{t}\alpha(u)du\right);\label{gw1}
\end{eqnarray}
for two continuous functions $x(t)$ and $\beta(t)$, and an integrable
function $\alpha(t)$, defined on interval $[a,b]$, if $\beta(\cdot)$ is
nonnegative and
\begin{eqnarray*}
x(t)\le \alpha(t)+\int_{a}^{t}\beta(s)x(s)ds
\end{eqnarray*}
for all $t\in[a,b]$, then we have
\begin{eqnarray}
x(t)\le \alpha(t)+\int_{a}^{t}\alpha(s)\beta(s)\exp\left(\int_{s}^{t}\beta(u)du\right).\label{gw2}
\end{eqnarray}
\end{lemma}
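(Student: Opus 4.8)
The plan is to establish both inequalities by the classical integrating-factor technique, treating the two parts in parallel. For the first (differential) inequality, since $x(t)$ is assumed differentiable, I would introduce the auxiliary function $y(t)=x(t)\exp\left(-\int_{a}^{t}\alpha(s)ds\right)$ and differentiate, obtaining $\dot{y}(t)=[\dot{x}(t)-\alpha(t)x(t)]\exp\left(-\int_{a}^{t}\alpha(s)ds\right)$. The hypothesis $\dot{x}\ge\alpha(t)x(t)+\beta(t)$ then yields $\dot{y}(t)\ge\beta(t)\exp\left(-\int_{a}^{t}\alpha(s)ds\right)$, and because the exponential factor is strictly positive no inequality is reversed. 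Integrating from $a$ to $t$ and using $y(a)=x(a)$ gives a lower bound for $y(t)$; multiplying back by $\exp\left(\int_{a}^{t}\alpha(s)ds\right)$ and combining the exponentials via $\int_{a}^{t}\alpha-\int_{a}^{s}\alpha=\int_{s}^{t}\alpha$ recovers exactly (\ref{gw1}).

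For the second (integral) inequality the function $x(t)$ is only continuous, so I cannot differentiate it directly; instead I would set $z(t)=\int_{a}^{t}\beta(s)x(s)ds$, which is differentiable with $\dot{z}(t)=\beta(t)x(t)$ since $\beta x$ is continuous. The hypothesis $x(t)\le\alpha(t)+z(t)$ together with the sign condition $\beta(t)\ge0$ gives $\dot{z}(t)\le\beta(t)\alpha(t)+\beta(t)z(t)$; here the nonnegativity of $\beta$ is precisely what allows the bound on $x$ to be substituted into $\beta(t)x(t)$ without flipping the inequality. Applying the same integrating factor, now with $w(t)=z(t)\exp\left(-\int_{a}^{t}\beta(s)ds\right)$ and $z(a)=0$, and integrating yields the bound on $z(t)$; since $x(t)\le\alpha(t)+z(t)$, substituting gives (\ref{gw2}).

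The argument is routine and no serious obstacle arises, but two points deserve care, and I would treat the sign bookkeeping as the main thing to watch. First, the direction of every inequality must be controlled: the differential form produces a lower bound and the integral form an upper bound, and in each case the positivity of the exponential integrating factor (and, in the second part, the assumed $\beta\ge0$) is what preserves the sign. Second, because $\alpha(t)$ in the integral inequality is an arbitrary, possibly nonmonotone, continuous function, one cannot collapse the result to the simpler form $x(t)\le\alpha(t)\exp\left(\int_{a}^{t}\beta\right)$; the product $\alpha(s)\beta(s)$ must be carried through the integration, which is exactly why the stated bound (\ref{gw2}) retains the integral of $\alpha(s)\beta(s)$ against the exponential kernel. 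The regularity hypotheses—differentiability of $x$ in the first part and mere continuity in the second—are matched precisely to the two arguments.
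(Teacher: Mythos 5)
Your proof is correct: both integrating-factor arguments go through exactly as you describe, the sign bookkeeping is handled properly (positivity of the exponential factor in part one, nonnegativity of $\beta$ in part two), and the resulting bounds match (\ref{gw1}) and (\ref{gw2}). The paper itself offers no proof---it cites the classical references and remarks that the lemma ``can be verified straightforwardly''---and your argument is precisely that standard verification, so there is nothing to reconcile.
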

By this lemma, $\varrho$ is a nonnegative-valued continuous map and satisfies (i). $\varrho(\cdot,\theta,\theta,u_{0},u_{0})\equiv 0$; (ii). $\varrho(0,\cdot,\cdot,u_{0},u_{0})\equiv 0$.
For example, assuming that condition (\ref{vi}) holds,
we have
\begin{align*}
&\frac{d}{dt}[(u(t)-v(t))^{\top}P(u(t)-v(t))]\left|_{(\ref{r2})}\right.\\
=&2(u-v)^{\top}P[f(u)-f(v)+\theta-\vartheta]\\
\ge&{2\sigma}(u-v)^{\top}P(u-v)-\mu (u-v)^{\top}P(u-v)\\
&-\frac{1}{\mu}(\theta-\vartheta)^{\top}P(\theta-\vartheta)
\end{align*}
for any $\mu>0$. By the Gronwall-Bell inequality (\ref{gw1}), we have
\begin{align*}
&(u(t)-v(t))^{\top}P(u(t)-v(t))\\
&\ge\exp{[({2\sigma}-\mu)t]}(u_{0}-v_{0})^{\top}P(u_{0}-v_{0})\\
&-\frac{(\theta-\vartheta)^{\top}P(\theta-\vartheta)/\mu}{2\sigma-\mu}\bigg\{\exp[({2\sigma}-\mu)t]-1\bigg\}
\end{align*}
which is positive for a small interval of $t$, starting from $0$, and $u_{0}\ne v_{0}$.

It can be seen that (\ref{vi}) holds for a large class of functions
$f(\cdot)$. For example,  if there exists some $\sigma\in\R$ such that
\begin{eqnarray}
\{P\frac{\partial f}{\partial x}(x)\}^{s}\ge\sigma P\label{Pf}
\end{eqnarray}
for all $x\in\R^{n}$ and some $\sigma\in\R$, then
\begin{eqnarray*}
&&(u-v)^{\top}P[f(u)-f(v)]=\int_{0}^{1}(u-v)^{\top}\\
&&\{P\frac{\partial f}{\partial x}(\lambda(v-u)+v)\}^{s}(u-v)d\lambda\ge\sigma (u-v)^{\top}P(u-v),
\end{eqnarray*}
which implies (\ref{vi}). In particular, if the Jacobin of $f(\cdot)$ is
bounded, namely, $ \|{\partial f}/{\partial x}\|\ge J_{f}$ for some $J_f>0$,
then we have
\begin{eqnarray*}
\{P\frac{\partial f}{\partial x}(y)\}^{s}\ge-J_{f}\|P\|,
\end{eqnarray*}
which implies $\sigma=-J_{f}\frac{\|P\|_{2}}{\lambda_{\min}(P)}$ in
(\ref{Pf}).

Second, we consider the differential equations (\ref{r2}) and suppose that the
solutions of (\ref{r2}) satisfy the following inequality:
\begin{eqnarray}
\|(u(t)-u_{0})-(v(t)-v_{0})\|\le\rho(t,\theta,\vartheta,u_{0},v_{0}),\label{r1}
\end{eqnarray}
where $\rho$ is nonnegative-valued continuous map that depends on the node
dynamics map $f(\cdot)$, the initial value $u_{0},v_{0}$ and inputs
$\theta,\vartheta$, and satisfies  $\rho(0,\cdot,\cdot,\cdot,\cdot)\equiv
0$. Geometrically, $\rho$ is an upper-bound estimation of the difference
between the displacements of two trajectories of (\ref{r2}) with respect to
their initial locations,
\begin{align*}
&\|(u(t)-u_{0})-(v(t)-v_{0})\|\\
=&\left\|\int_{0}^{t}[f(u(s))-f(v(s))]d s+(\theta-\vartheta) t\right\|.
\end{align*}

For example, if $f(\cdot)$ is Lipschitz (on the two trajectories):
$\|f(u(s))-f(v(s))\|\le L_{f}\|u(s)-v(s)\|$ for all $s\ge 0$, then we have
\begin{align*}
&\|(u(t)-u_{0})-(v(t)-v_{0})\|\\
\le& L_{f}\int_{0}^{t}\|(u(s)-u_{0})-(v(s)-v_{0})\|ds\\
&+(\|\theta-\vartheta\|+L_f\|u_{0}-v_{0}\|)~t.
\end{align*}
By the Gronwall inequality (\ref{gw2}), we have
\begin{align}
\label{Lips}
\nonumber&\|(u(t)-u_{0})-(v(t)-v_{0})\|\\
\le&\frac{(\|\theta-\vartheta\|+L_f\|u_{0}-v_{0}\|)}{L_{f}}[\exp(L_{f}t)-1].
\end{align}
It can be seen that the upper-bound equals to zero if $t=0$.

It can be seen that the estimation of $\rho$ and $\varrho$ substantially
depends on the form of $f(\cdot)$. There might not be a unified approach to
give precise estimation for general $f(\cdot)$ but might be done case by
case. Therefore, an efficient but cost way is to use integrators that
simulate the node dynamics of $\dot{u}=f(u)+\theta$ as the generators of
$\rho$ and $\varrho$. These generators are independent of the states of the
nodes and so parallel to the networked systems. Figs. \ref{rho}  and
\ref{varrho} show the configurations of the generators of $\rho$ and
$\varrho$ respectively.

\begin{figure}
\centering
\includegraphics[width=0.5\textwidth]{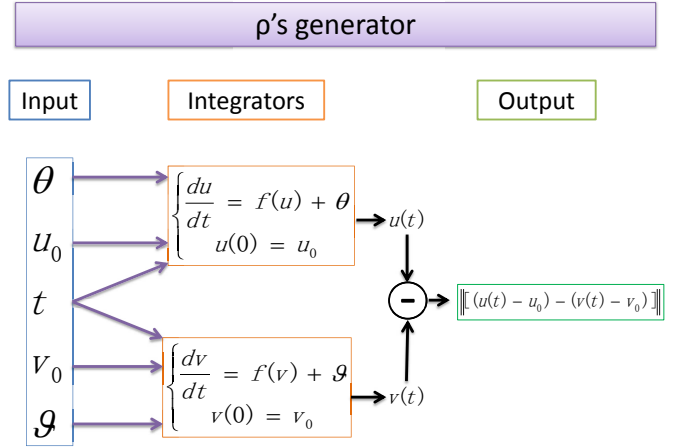}
\caption{$\rho$ generator.} \label{rho}
\end{figure}

\begin{figure}
\centering
\includegraphics[width=0.5\textwidth]{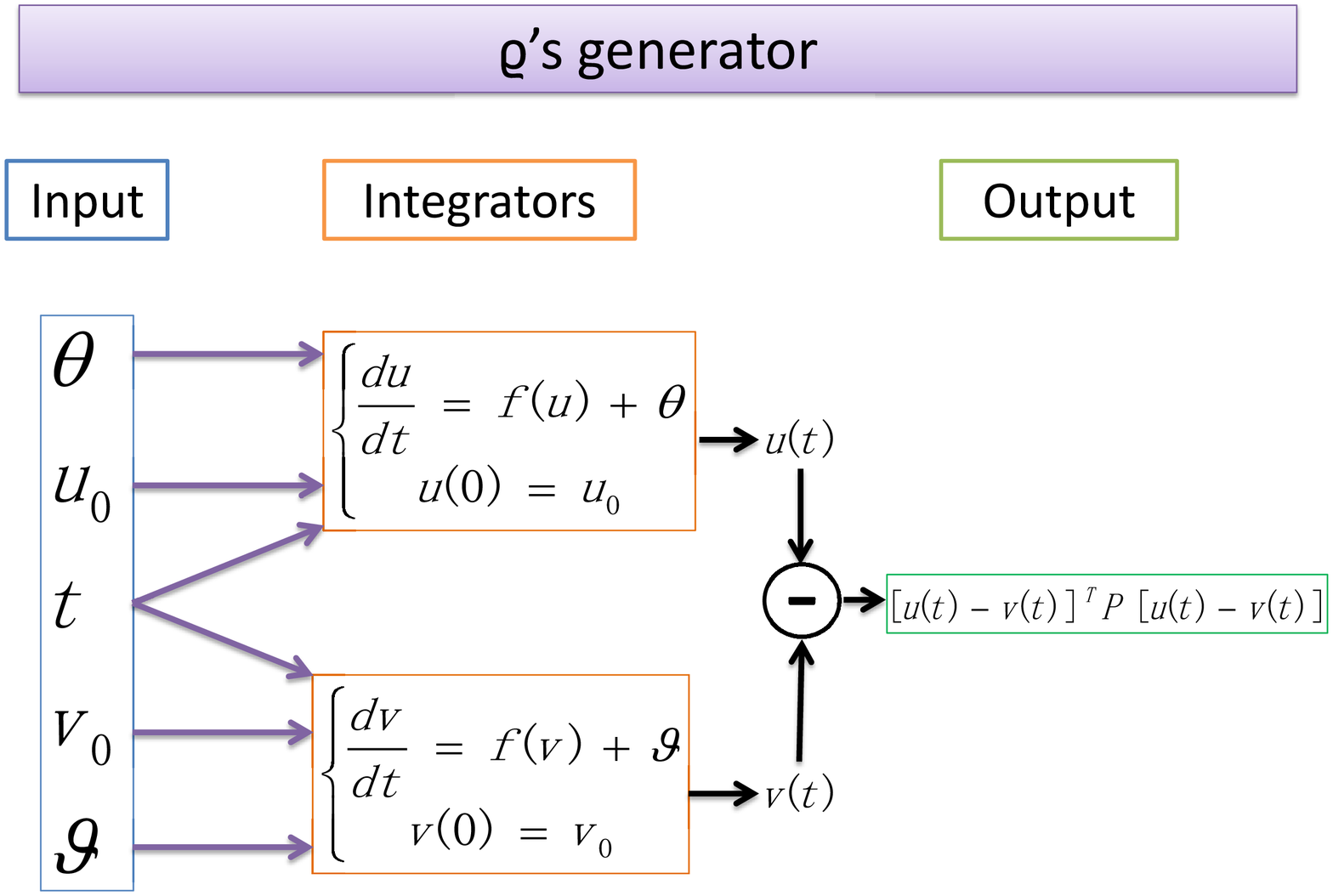}
\caption{$\varrho$ generator.} \label{varrho}
\end{figure}

Let
\begin{align*}
&\theta^{i}_{k}=-c\sum_{j=1}^{m}L_{ij}(x^{j}(t^{i}_{k})-x^{i}(t^{i}_{k})),\\
&\theta_{k_{j}(t)}^{j}=-c\sum_{l=1}^{m}L_{jl}[x^{l}(t^{j}_{k_{j}(t)})-x^{j}(t^{j}_{k_{j}(t)})]
\end{align*}
Then, for $\tau\ge t^{i}_{k}$, we have
\begin{align*}
&\|e_{i}(\tau)\|\le\sum_{q\ne i}(-L_{iq})\|\Gamma\|\rho(\tau-t_{k}^{i},\theta^{i}_{k},\theta_{k_{q}(\tau)}^{q},x^{i}(t_{k}^{i}),x^{q}(t^{i}_{k})),\\
&z_{i}(\tau)\ge\sqrt{\sum_{j\ne i}(-L_{ij})\varrho^{2}(\tau-t_{k}^{i},\theta^{i}_{k},\theta_{k_{j}(\tau)}^{j},x^{i}(t_{k}^{i}),x^{j}(t^{i}_{k}))}.
\end{align*}

With these assumptions and Theorem \ref{thm1}, we have the following result.
\begin{theorem}\label{thm3}
Suppose that $f\in Quad(P,\alpha\Gamma,\beta)$ with positive matrix $P$ and $\beta>0$ such that $P\Gamma$ is semi-positive definite. $c\lambda_{2}(L)>\alpha$. For any positive $\beta'<\beta$, set $\xi^{i}_{k}$ by:
\begin{align}
\xi^{i}_{k}=\max\bigg\{&\xi:~\sum_{q\ne i}(-L_{iq})\|\Gamma\|\rho(\xi,\theta^{i}_{k},\theta_{k_{q}(\xi+t_k^i)}^{q},x^{i}(t_{k}^{i}),x^{q}(t^{i}_{k}))
\nonumber\\
\nonumber&\le\frac{\beta'}{c\sqrt{\lambda_{m}(L)\lambda_m(P)}}\bigg[\sum_{q\ne i}(-L_{iq})\varrho^{2}(\xi,\theta^{i}_{k},\theta_{k_{j}(\xi+t_k^i)}^{j},\\
&~~~~~~~~~~~~~~~~~~
~~~~~~~~~x^{i}(t_{k}^{i}),x^{j}(t^{i}_{k}))\bigg]\bigg\}.\label{event3}
\end{align}
The event timing $\{t^{i}_{k}\}$ are set by the following scheme:
\begin{enumerate}
\item Initialization: $t_{0}^{i}=0$ for all $i=\onetom$;
\item For node $i$, set $\xi^{i}_{k}$ via its neighbor's and its own current states and diffusion by (\ref{event3});
\item If one of its neighbors, for example, $j$, triggers at $t=t^{j}_{k'+1}$ ($k'$ is the latest event at node $j$ before $t$), then  replace $\theta^{j}_{k'}$ by $\theta^{j}_{k'+1}$ in (\ref{event3}) and go to Step 2;
\item Let $t^{i}_{k+1}=t^{i}_{k}+\xi^{i}_{k}$, the event triggers at node $i$ by changing $t^{i}_{k}$ in (\ref{cds}) to $t^{i}_{k+1}$.
\end{enumerate}
Then, system (\ref{cds}) synchronizes.
\end{theorem}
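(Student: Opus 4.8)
The plan is to reduce the discrete-monitoring rule (\ref{event3}) to the continuous-monitoring criterion (\ref{event1}) of Theorem \ref{thm1}, after which the exponential decay estimate (\ref{dV}) applies unchanged. The entire design of (\ref{event3}) is to substitute, for the two continuously measured quantities appearing in (\ref{event1}), a predicted \emph{upper} bound on $\|e_i(\tau)\|$ built from the generator $\rho$ and a predicted \emph{lower} bound on $z_i(\tau)$ built from $\varrho$, both computable from the data $x^i(t_k^i),x^j(t_k^i)$ and the diffusion value $\theta_k^i$ available at the last event, together with the neighbor updates relayed through Step 3. I would therefore first record the two validity inequalities stated immediately before the theorem: for every $\tau\ge t_k^i$ on the current piece, $\|e_i(\tau)\|$ is bounded above by $\sum_{q\ne i}(-L_{iq})\|\Gamma\|\rho(\cdots)$, which follows from (\ref{r1}) applied pairwise across the neighbors of $i$, while $z_i(\tau)$ is bounded below by the $\varrho$-expression, which follows from the lower-bound construction for $\varrho$ via (\ref{gw1}).

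The core implication is then a single chaining of inequalities. By the definition of $\xi_k^i$ in (\ref{event3}), for every $\tau\in[t_k^i,t_k^i+\xi_k^i)=[t_k^i,t_{k+1}^i)$ the predicted upper bound on $\|e_i(\tau)\|$ does not exceed $\frac{\beta'}{c\sqrt{\lambda_m(L)\lambda_m(P)}}$ times the predicted lower bound on $z_i(\tau)$. Combining this with the two validity inequalities of the preceding paragraph gives $\|e_i(\tau)\|\le\frac{\beta'}{c\sqrt{\lambda_m(L)\lambda_m(P)}}\,z_i(\tau)$, which is exactly criterion (\ref{event1}) holding for node $i$ throughout its inter-event interval. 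Since this is enforced at all nodes and on all intervals, inequality (\ref{est1}) holds for all $t\ge 0$, so the argument of Theorem \ref{thm1} delivers $\frac{d}{dt}V(t)\le-2(\beta-\beta')V(t)$, whence $V(t)\le V(0)\exp[-2(\beta-\beta')t]\to 0$. Because $V(t)=x^{\top}(t)(L\otimes P)x(t)$ with $L$ irreducible and $P$ positive definite, $V(t)\to 0$ forces $x^i(t)-x^j(t)\to 0$ for all $i,j$, i.e.\ synchronization.

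Before invoking this chain one must check that each $\xi_k^i$ produced by (\ref{event3}) is strictly positive, so that the dynamics is well defined. At $\xi=0$ the left-hand side of (\ref{event3}) vanishes because $\rho(0,\cdots)\equiv 0$, while the right-hand side equals $\frac{\beta'}{c\sqrt{\lambda_m(L)\lambda_m(P)}}z_i(t_k^i)$, which is strictly positive whenever node $i$ is not yet agreed with its neighbors; continuity of $\rho$ and $\varrho$ then yields $\xi_k^i>0$. The step I expect to be the main obstacle is the bookkeeping across Step 3: each time a neighbor $j$ triggers, the diffusion value $\theta^j$ jumps, so the true dynamics of $e_i$, and hence the bounding generators, are only piecewise defined, and $\xi_k^i$ must be recomputed from the updated $\theta^j$. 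The point to verify carefully is that after each such update the generators $\rho,\varrho$ remain genuine upper/lower bounds for the true $\|e_i\|$ and $z_i$ on the new piece, so that the chaining of the second paragraph survives the updates and (\ref{event1}) is never violated. I would emphasize that Theorem \ref{thm3} asserts only synchronization, not a uniform positive lower bound on the inter-event times; the Zeno-freeness question treated for the continuous case in Theorem \ref{thm2} is deliberately left aside here and is not needed for the convergence conclusion.
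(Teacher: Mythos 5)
Your proposal is correct and follows exactly the paper's (very terse) argument: the paper proves Theorem~\ref{thm3} in one sentence by noting it follows from rule (\ref{event1}) in Theorem~\ref{thm1}, and your chaining of the $\rho$-upper bound on $\|e_i(\tau)\|$ and the $\varrho$-lower bound on $z_i(\tau)$ through the trigger condition (\ref{event3}) is precisely the reduction the paper intends, with the positivity of $\xi_k^i$ and the Step-3 bookkeeping being details the paper itself leaves implicit (the latter in the first Remark after the theorem).
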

The proof of this theorem can be derived by using (\ref{event1}) in Theorem
\ref{thm1} directly.

\begin{remark}
Firstly,  if node $i$ has a neighbor $j$ satisfying $x^{j}(t^{i}_{k})\ne
x^{i}(t^{i}_{k})$, then $\xi_{k}^{i}$ in (\ref{event3}) is well defined. In
fact, in (\ref{event3}), if $\xi=0$, the left-hand side of the inequality in
(\ref{event3}) equals zero while the right-hand side is nonzero. Therefore, by
the continuous dependence of the parameters in the system (\ref{cds}),
(\ref{event3}) has indeed a positive maximum $\xi$.

Secondly, each node $i$ needs to know the states of itself and
its neighbors. In details, when one node $j$ is triggered, it sends off its new
coupling terms, $\theta_{j}(t^{j}_{k_{j}(t)})$, to all its neighbors for
their updating the estimation of $\xi^{i}_{k}$ for their next updating
times.
\end{remark}

\begin{remark}
In case that $x^{j}(t^{i}_{k})=x^{i}(t^{i}_{k})$ for all neighbors $j$ of node $i$, both left and right sides equal zero, which might lead to a Zeno behavior. To avoid the Zeno behavior, we provide a triggering event, which depends only on $t_{k}^{i}$ by the rule (\ref{event2}) in Theorem \ref{thm1}. Here, we only need to estimate the bounds of $(x^{q}(t)-x^{q}(t_{k}^{l}))-(x^{i}(t)-x^{i}(t_{k}^{l}))$ for any $q,i$ with $L_{iq}\neq 0$. Note
\begin{align*}
&(x^{q}(t)-x^{q}(t_{k}^{l}))-(x^{i}(t)-x^{i}(t_{k}^{l}))\\
=&\int_{t_{k}^l}^t\left[f(x^q(s))-f(x^i(s))+\theta_{k_{q}(s)}^{q}-\theta^{i}_{k}\right]ds.
\end{align*}
Combing with  $\|\theta^{i}_{k}-\theta_{k_{q}(s)}^{q}\| \leq M$, where $M>0$ is some constant, we suppose that
the solutions of (\ref{r2}) satisfy the following inequality:
\begin{eqnarray}
\|(u(t)-u_{0})-(v(t)-v_{0})\|\le\rho_1(t,u_{0},v_{0}).\label{rho2}
\end{eqnarray}
Then, for $\tau\ge t^{i}_{k}$, we have
\begin{align*}
\|e_{i}(\tau)\|\le\sum_{j\ne i}(-L_{ij})\|\Gamma\|\rho_1(\tau-t_{k}^{i},x^{i}(t_{k}^{i}),x^{j}(t^{i}_{k})).
\end{align*}
\end{remark}

\begin{theorem}\label{thm4}
Suppose that $f\in Quad(P,\alpha\Gamma,\beta)$ with positive matrix $P$ and $\beta>0$ such that $P\Gamma$ is semi-positive definite. $c\lambda_{2}(L)>\alpha$. For any positive $\beta'<\beta$, set inter-event interval $\xi^{i}_{k}$ by:
\begin{align}\label{event4}
\nonumber\xi^{i}_{k}=\max\bigg\{&\xi:~\sum_{j\ne i}(-L_{ij})\|\Gamma\|\rho_1(\xi,x^{i}(t_{k}^{i}),x^{j}(t^{i}_{k}))\\
&\le a\exp{[-b(\xi+t_k^i)]}\bigg\}.
\end{align}
The event times $\{t^{i}_{k}\}$ are set by the following scheme:
\begin{enumerate}
\item Initialization: $t_{0}^{i}=0$ for all $i=\onetom$;
\item For node $i$, search $\xi^{i}_{k}$ via its neighbor's and its own current states by (\ref{event3});
\item Triggers node $i$ by changing $t^{i}_{k}$ in (\ref{cds}) to $t^{i}_{k+1}=t^{i}_{k}+\xi^{i}_{k}$,.
\end{enumerate}
Then, system (\ref{cds}) synchronizes.
\end{theorem}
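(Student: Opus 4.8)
The plan is to reduce Theorem~\ref{thm4} to part~(2) of Theorem~\ref{thm1}: the discrete rule (\ref{event4}) is engineered precisely so as to force the estimate (\ref{est2}), namely $\|e_i(t)\|\le a\exp(-bt)$, on every inter-event interval, after which the Lyapunov computation culminating in (\ref{Vcase2}) applies without change and yields exponential decay of $V$. No new estimate on $V$ is needed; the whole content is that rule (\ref{event4}) certifies the hypothesis (\ref{est2}) of that earlier argument.

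First I would fix a node $i$ and an interval $[t_k^i,t_{k+1}^i]$ with $t_{k+1}^i=t_k^i+\xi_k^i$, and exploit the displacement bound recorded in the Remark preceding the theorem: for $\tau=t_k^i+\xi$ with $\xi\in[0,\xi_k^i]$,
\begin{align*}
\|e_i(\tau)\|\le\sum_{j\ne i}(-L_{ij})\|\Gamma\|\,\rho_1(\xi,x^i(t_k^i),x^j(t^i_k))=:g_i(\xi).
\end{align*}
Since $\rho_1$ is nondecreasing in its first argument and vanishes at $0$ (as in the explicit Gronwall bound (\ref{Lips})), the map $g_i$ is nondecreasing with $g_i(0)=0$, while $h(\xi):=a\exp[-b(\xi+t_k^i)]$ is strictly decreasing with $h(0)=a\exp(-bt_k^i)>0$. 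Hence $g_i-h$ is strictly increasing and is negative at $\xi=0$, so $\{\xi:g_i(\xi)\le h(\xi)\}$ is an interval $[0,\xi_k^i]$ on which $g_i(\xi)\le h(\xi)$ holds \emph{throughout}, not merely at the endpoint. Substituting back gives $\|e_i(\tau)\|\le a\exp(-b\tau)$ for every $\tau\in[t_k^i,t_{k+1}^i]$, which is exactly (\ref{est2}).

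Since this holds for every node and every interval, the piecewise-constant couplings keep $V$ continuous and piecewise $C^1$, and on the interior of each interval its derivative obeys (\ref{Vcase2}) with the choice $\upsilon=2\beta'/(c\lambda_m(L)\lambda_m(P))$, where the standing hypotheses $c\lambda_2(L)>\alpha$ and $P\Gamma\succeq0$ are what allow the two negative terms in (\ref{c}) to be dropped. Integrating the resulting differential inequality (valid a.e.\ and for an absolutely continuous $V$) reproduces $V(t)\le\rho\exp[-2\min(b,\beta-\beta')t]$ exactly as in Theorem~\ref{thm2}(2), so $V(t)\to0$ exponentially and (\ref{cds}) synchronizes.

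The step I expect to demand the most care is the well-posedness of the scheme, i.e.\ strict positivity of each $\xi_k^i$ and the consequent absence of Zeno behaviour. Here the advantage of the (\ref{event2})-type threshold is decisive: because $g_i(0)=0<h(0)$, continuity of $\rho_1$ forces $\xi_k^i>0$ even in the degenerate configuration in which all neighbours of $i$ momentarily coincide — the very case that can stall rule (\ref{event3}) in Theorem~\ref{thm3}, as flagged in the preceding Remark. Upgrading positivity to a uniform lower bound requires, in addition, that the constant $M$ bounding $\|\theta^i_k-\theta^q_{k_q(s)}\|$ in (\ref{rho2}) be taken uniform in $k$; this follows once $V$ (hence the diffusions $\theta^i_k$, which are controlled by the node differences) is known to stay bounded, whereupon the crossing equation $g_i(\xi)=h(\xi)$ has a root bounded away from zero independently of $k$, closing the argument just as in the discrete part of Theorem~\ref{thm2}.
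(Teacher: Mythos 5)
Your proposal is correct and follows the same route as the paper, whose entire proof is the one-line observation that Theorem~\ref{thm4} ``can be derived by using (\ref{event2}) in Theorem \ref{thm1}''; you simply make explicit the reduction the paper leaves implicit, namely that the monotonicity of $\rho_1$ in its first argument forces $\|e_i(\tau)\|\le a\exp(-b\tau)$ throughout each inter-event interval, after which (\ref{Vcase2}) gives exponential decay of $V$. Your closing discussion of strict positivity of $\xi_k^i$ and the uniformity of the bound $M$ matches what the paper handles separately in the remark following the theorem and in Theorem~\ref{thm5}(2).
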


This theorem can be derived by using (\ref{event2}) in Theorem \ref{thm1}.

It should be highlighted that under rule (\ref{event4}), every node does not need to know the coupling terms of neighbors anymore and the inter-event intervals have a lower-bound.

In fact, in (\ref{event4}), if $\xi=0$, the left-hand part equals zero while the right-hand is nonzero. Therefore, according to the continuous dependence of the parameters in the system (\ref{cds}), (\ref{event4}) has indeed a positive maximum argument $\xi_k^i$.

Similar to Theorem \ref{thm2}, we have
\begin{theorem}\label{thm5}
Suppose that $f\in Quad(P,\alpha\Gamma,\beta)$ with positive matrix $P$ and $\beta>0$, satisfies Lipschitz condition with Lipschitz constant $L_f$, and there exists some $\sigma$ (possibly negative) such that (\ref{vi}) holds for all $u,v\in\mathbb R^{n}$ . $c\lambda_{2}(L)>\alpha$ and $P\Gamma$ is semi-positive definite. For any $\beta'<\beta$, any initial condition and any time $t\leq 0$, we have
\begin{enumerate}
\item[(1)]
under the updating rule (\ref{event3}), there exists $\tau_{O}>0$ such that there exists at least one agent
$k\in\{1,\cdots,m\}$ such that the next inter-event interval is strictly
positive and has the lower-bound $\tau_{O}$; in addition, if there exists $\varsigma>0$ such that $z_{i}^{2}(t)\ge \varsigma V(t)$ for all $i=\onetom$ and $t\ge 0$, then the next inter-event interval of every
agent is strictly positive and has a common positive lower-bound.
\item[(2)] suppose $f(\cdot)$ is Lipschitz with constant $L_f$. Then,
under the updating rule (\ref{event4}), the next inter-event interval of every agent is strictly positive and has a common lower-bound $\tau_D'$.
\end{enumerate}

\end{theorem}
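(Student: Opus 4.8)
The plan is to reproduce the architecture of the proof of Theorem~\ref{thm2}, with the estimators $\rho$ and $\varrho$ -- rather than the continuous-time bounds (\ref{e}) and (\ref{z}) -- in the controlling role, since in the discrete-monitoring scheme the triggering is decided by (\ref{event3}) and (\ref{event4}) directly and may fire strictly earlier than the corresponding continuous rule. For part~(1) I would first record the explicit estimators available under the standing hypotheses: as $f$ is Lipschitz, $\rho$ obeys the upper bound (\ref{Lips}); as (\ref{vi}) holds, $\varrho^{2}$ obeys the Gronwall lower bound derived right after (\ref{gw1}), which at $\xi=0$ equals $(u_{0}-v_{0})^{\top}P(u_{0}-v_{0})$ and stays positive on a right-neighbourhood of $0$. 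The structural point is that both bounds are positively homogeneous of degree one in the reset separations $x^{q}(t_{k}^{i})-x^{i}(t_{k}^{i})$, because the coupling inputs $\theta^{i}_{k},\theta^{q}_{k_{q}}$ are themselves linear in such separations; hence the inequality defining (\ref{event3}) is invariant under a common rescaling of all separations, so the admissible $\xi$ depends only on the configuration of the state, not on its magnitude.

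To turn this scale invariance into a genuine lower bound I would choose, via (\ref{z}), an index $i_{*}$ with $z_{i_{*}}^{2}(t_{k}^{i_{*}})\ge V(t_{k}^{i_{*}})/m$. For this agent the right-hand side of (\ref{event3}) at $\xi=0$ is $\frac{\beta'}{c\sqrt{\lambda_{m}(L)\lambda_{m}(P)}}\,z_{i_{*}}(t_{k}^{i_{*}})$, bounded below relative to the overall magnitude, whereas its left-hand side vanishes linearly as $\xi\to0$. Substituting (\ref{Lips}) and the Gronwall estimate for $\varrho^{2}$, the common magnitude factor $\sqrt{V(t_{k}^{i_{*}})}$ cancels between the two sides -- exactly the cancellation that produced $\tau_{O}$ in Theorem~\ref{thm2} -- leaving a state-independent transcendental inequality in $\xi$ whose smallest positive root $\tau_{O}$ lower-bounds $\xi_{k}^{i_{*}}$. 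For the supplementary claim, if $z_{i}^{2}(t)\ge\varsigma V(t)$ holds for every $i$, the same computation applies to each agent with $\varsigma$ replacing $1/m$, so every inter-event interval is bounded below by a common $\tau_{O'}>0$. I expect this paragraph to be the main obstacle: the delicate points are verifying the degree-one homogeneity when the inputs $\theta^{i}_{k},\theta^{q}_{k_{q}}$ reference states at distinct event times, and checking that the subtracted term in the lower estimate for $\varrho^{2}$ does not spoil positivity on $[0,\tau_{O}]$.

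For part~(2) I would mirror the second half of Theorem~\ref{thm2}. Since rule (\ref{event4}) forces (\ref{event2}), the decay bound behind (\ref{Vcase2}) is available, giving $V(t)\le\rho\exp[-2\min(b,\beta-\beta')t]$; consequently the reset separations, and with them the coupling inputs at $t_{k}^{i}$, are all controlled by $\exp[-\min(b,\beta-\beta')t_{k}^{i}]$. Using the Lipschitz form of $\rho_{1}$, the left-hand side of (\ref{event4}) is then bounded by $C\exp[-\min(b,\beta-\beta')t_{k}^{i}]\,[\exp(L_{f}\xi)-1]$ for a constant $C$ independent of $k$ and of the state, so $\xi_{k}^{i}$ is no smaller than the solution $\tau_{D}'$ of $C\exp[-\min(b,\beta-\beta')t_{k}^{i}]\,[\exp(L_{f}\tau)-1]=a\exp[-b(\tau+t_{k}^{i})]$. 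Exactly as at the end of Theorem~\ref{thm2}, this equation has a positive root bounded away from zero uniformly in $t_{k}^{i}$, which furnishes the common lower bound $\tau_{D}'$ and completes the proof.
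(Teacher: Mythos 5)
Your proposal takes essentially the same approach as the paper, which in fact offers no written proof of Theorem~\ref{thm5} at all beyond the phrase ``Similar to Theorem~\ref{thm2}, we have'': your part~(1) reproduces exactly the intended adaptation (select $i_{*}$ with $z_{i_{*}}^{2}\ge V/m$, cancel the common $\sqrt{V(t_{k}^{i_{*}})}$ factor between the $\rho$-upper bound and the $\varrho$-lower bound, and define $\tau_{O}$ by the resulting state-independent transcendental equation), and your part~(2) mirrors the comparison-equation argument ending Theorem~\ref{thm2}'s proof with $\rho_{1}$ in place of (\ref{e}). The delicate points you flag --- coupling inputs referencing earlier event times where $V$ was larger, positivity of the subtracted term in the $\varrho^{2}$ estimate, and uniformity of the root in $t_{k}^{i}$ --- are equally unaddressed in the paper's own proof of Theorem~\ref{thm2}, so your sketch is at least as complete as the paper's treatment.
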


\begin{remark}
In comparison with the continuous-time monitoring, the discrete-time
monitoring works well particularly when the states of nodes cannot be
monitored spontaneously. Generally speaking, the main difference between
these two monitoring strategies is that continuous-time monitoring
determines the next updating time in an on-line way, based on the
spontaneous information of states of nodes. Instead, the discrete-time
monitoring predicts the next updating time. Therefore, the discrete-time
monitoring costs less for collecting state information than continuous-time
monitoring. However, as a trade-off, it needs more calculations in
predicting the next updating time, as mentioned in (\ref{event3}) or
(\ref{event4}).
\end{remark}

\section{Examples}
In this section, we present two examples to illustrate the theoretical results. The system is an array of $10$ linearly coupled Chua circuits with the node dynamics
\begin{eqnarray}
f(z)=\left[\begin{array}{c}p*(-z_{1}+z_{2}-g(z_{1}))\\
z_{1}-z_{2}+z_{3}\\
-q*z_{2}\end{array}\right],
\end{eqnarray}
where $g(z_{1})=m_{1}*z_{1}+1/2*(m_{0}-m_{1})*(|z_{1}+1|-|z_{1}-1|)$, with the parameters $p=9.78$, $q=14.97$, $m_{0}=-1.31$ and $m_{1}=-0.75$, which implies that the intrinsic node dynamics (without diffusion) have a double-scrolling chaotic attractor \cite{chua}. The coupling graph topology is shown in Fig. \ref{topology}. $L$ is picked as the Laplacian of the graph where each link has uniform weight $1$. Then, the largest and smallest nonzero eigenvalues equal to $\lambda_{2}=0.8363$ and $\lambda_{m}=7.3484$ respectively. Let $P=\Gamma=I_{3}$. To estimate the parameter $\beta$ in the $Quad$ condition, noting the Jacobin matrices of $f$ is one of the following
\begin{align*}
&A_{1}=\left[\begin{array}{lll}-2.445&9.78&0\\
1&-1&1\\0&-14.97&0\end{array}\right]\\
&A_{2}=\left[\begin{array}{lll}3.0318&9.78&0\\
1&-1&1\\0&-14.97&0\end{array}\right]
\end{align*}
then we can estimate $\beta'=\alpha-\lambda_{\max}((A_{2})^{s})=\alpha-\chi$, where $
\chi=9.1207$ is the upper-bound of the largest eigenvalues of the symmetry parts of all possible Jacobin matrices of $f$.

\begin{figure}
\centering
\includegraphics[width=0.3\textwidth]{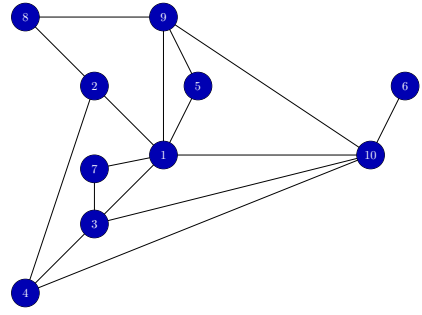}
\caption{Topology of the graph of the coupled system and the pinned node.} \label{topology}
\end{figure}

The ordinary differential equation (\ref{cds}) is numerically solved by the
Euler method with a time step $0.001$ (sec) and the time duration of the
numerical simulations is $[0,2]$(sec).

\subsection{Continuous-time monitoring}

First, we consider the updating rule (\ref{event1}). According to $c\lambda_{2}>\alpha$, where $\lambda_{2}(L)$ is the smallest eigenvalue of $L$, except the unique zero eigenvalue, the supremum of the term $\beta'/(c\sqrt{\lambda_m})$ is estimated as follows:
\begin{eqnarray*}
\sup\frac{\beta'}{c\sqrt{\lambda_m(L)}}=\frac{c\lambda_2(L)-\lambda_{\max}(A_{2}^{s})}{c\sqrt{\lambda_m(L)}}\to\frac{\lambda_{2}(L)}{\sqrt{\lambda_m(L)}}
\end{eqnarray*}
as $~c\to\infty$, by picking $\alpha=c\lambda_{2}(L)$.
Fig. \ref{ets} shows the variation of $\beta'/(c\sqrt{\lambda_m(L)})$ with respect to $c$. In this example, we pick $c=20.3281$, which implies $\beta'/(c\sqrt{\lambda_m(L)})=0.1450$. We employ the updating rule (\ref{event1}) in Theorem \ref{thm1}. Fig. \ref{variation_x_con1} presents the dynamics of each component of the $10$ nodes and show that the coupled system (\ref{cds}) reaches synchronization. Fig. \ref{variation_v_all} shows that $V(t)$ decreases with respect to time and converges toward zero as time goes to infinity.

\begin{figure}
\centering
\includegraphics[width=0.5\textwidth]{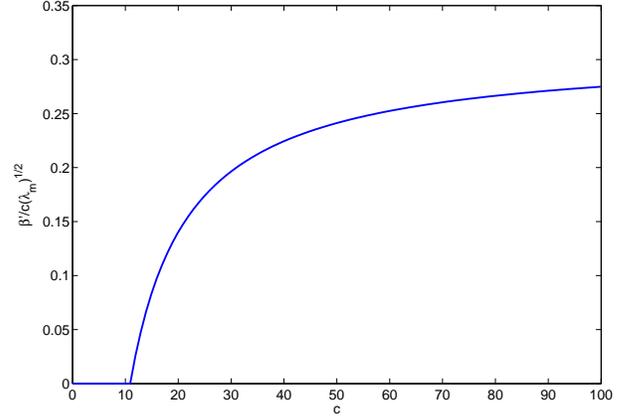}
\caption{Variation of $\beta'/(c\sqrt{\lambda_m})$ under continuous-time monitoring for synchronization with respect to the coupling strength $c$.} \label{ets}
\end{figure}

\begin{figure}
\centering
\includegraphics[width=0.5\textwidth]{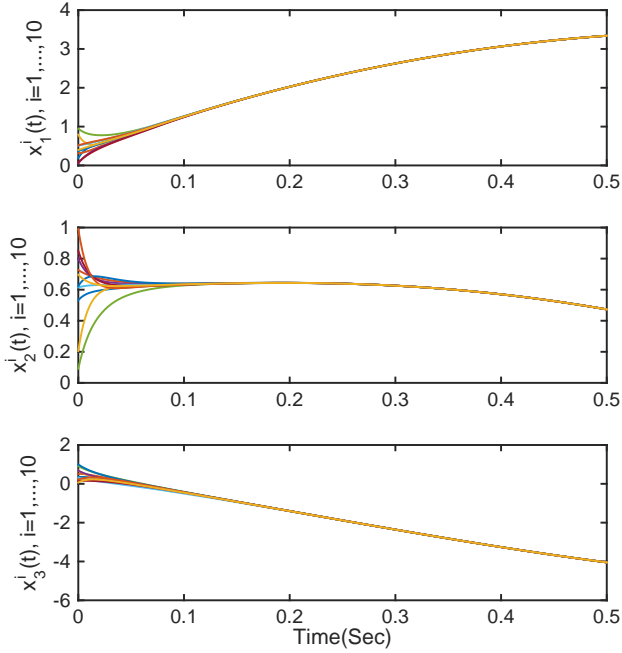}
\caption{Dynamics of components of the coupled system (\ref{cds}) under the event triggering rule (\ref{event1}). }
\label{variation_x_con1}
\end{figure}

Second, we employ the updating rule (\ref{event2}). We take the same value
of $c$ as above and $a=0.5$, $b=0.5$. Fig.
\ref{variation_x_con2} presents the dynamics of each
component of the $10$ nodes and show that the coupled system (\ref{cds})
reaches synchronization. Fig. \ref{variation_v_all} shows that $V(t)$  decreases
with time and converges to zero as time goes to infinity.

\begin{figure}
\centering
\includegraphics[width=0.5\textwidth]{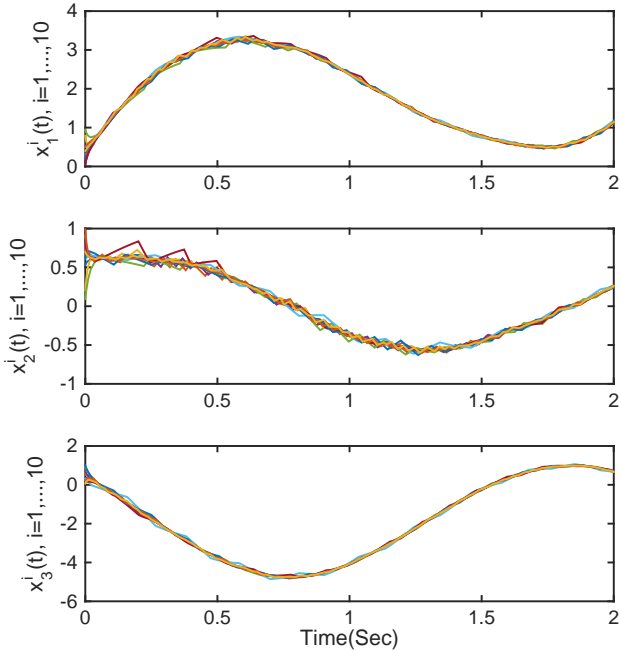}
\caption{Dynamics of components of the coupled system (\ref{cds}) under the event triggering rule (\ref{event2}). }
\label{variation_x_con2}
\end{figure}

\subsection{Discrete-time monitoring}

First, we employ the updating rule (\ref{event3}). The term $\beta'/(c\sqrt{\lambda_{m}})$ can be directly derived from the arguments above. We pick the same $c$ and then $\beta'/(c\sqrt{\lambda_{m}})$ is the same as above. We employ the event trigger algorithm (\ref{event3}) in Theorem \ref{thm3}. Fig. \ref{variation_x_dis1} presents the dynamics of each component of the $10$ nodes and shows that the coupled system (\ref{cds}) reaches synchronisation. Fig. \ref{variation_v_all} shows that $V(t)$ decreases with time and converges to zero.

\begin{figure}
\centering
\includegraphics[width=0.5\textwidth]{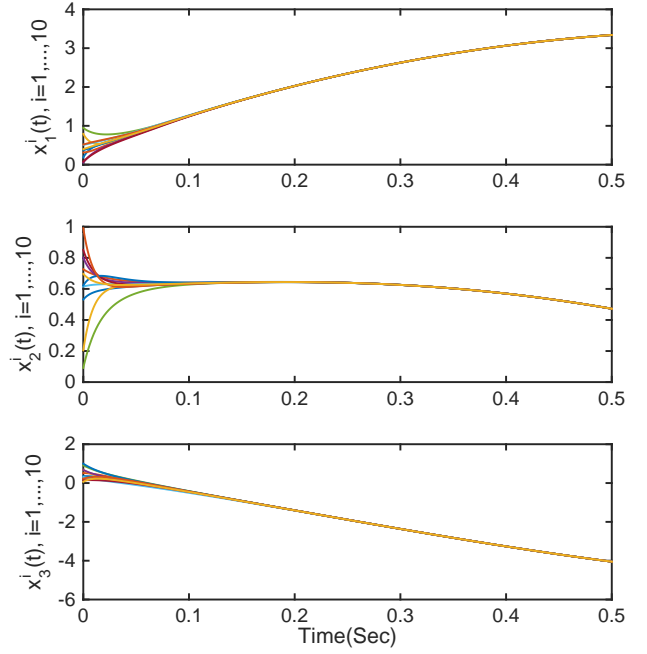}
\caption{Dynamics of components of the coupled system (\ref{cds}) under the event triggering rule (\ref{event3}). }
\label{variation_x_dis1}
\end{figure}

Second, we consider the updating rule (\ref{event4}).  We pick the same $c$ as above and $a=0.5$, $b=0.5$. Fig. \ref{variation_x_dis2} presents the dynamics of each component of the $10$ nodes and show that the coupled system (\ref{cds}) reach synchronization.  Fig. \ref{variation_v_all} shows that $V(t)$ decreases with respect to time and converges toward zero as time goes to infinity.

\subsection{Performance comparison}
In comparison, we consider the original linear coupled system as follows:
\begin{eqnarray}
\frac{dx^{i}(t)}{dt}=f(x^{i}(t))-c\sum_{i=1}^{m}L_{ij}\Gamma(x^{j}(t)-x^{i}(t)),\label{org}
\end{eqnarray}
for $i=\onetom$. By the same setups of model and numerical approach as
above, its performance in terms of converge rates of $V(t)$ is shown
similar with those of event-triggered rules (\ref{event1}) and
(\ref{event3}), as comparatively shown by Fig. \ref{variation_v_all}. As for the performance of rules
(\ref{event2}) and (\ref{event4}), since the exponential convergence rates
are pre-designed, as shown by (\ref{event2}) and (\ref{event4}), it is not
surprising that their convergence rates are not as good as (\ref{org}).
However, their updating times of these rules are much less than those of
rules  (\ref{event1}) and (\ref{event3}), as comparatively shown in Figs. \ref{triggering_times_1}-\ref{triggering_times_2}.

\begin{figure}
\centering
\includegraphics[width=0.5\textwidth]{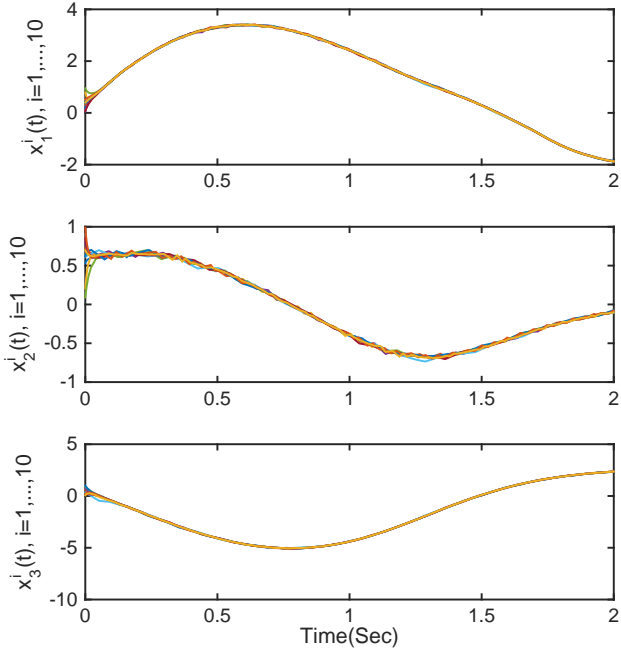}
\caption{Dynamics of components of the coupled system (\ref{cds}) under the event triggering rule (\ref{event4}). }
\label{variation_x_dis2}
\end{figure}

\begin{figure}
\centering
\includegraphics[width=0.5\textwidth]{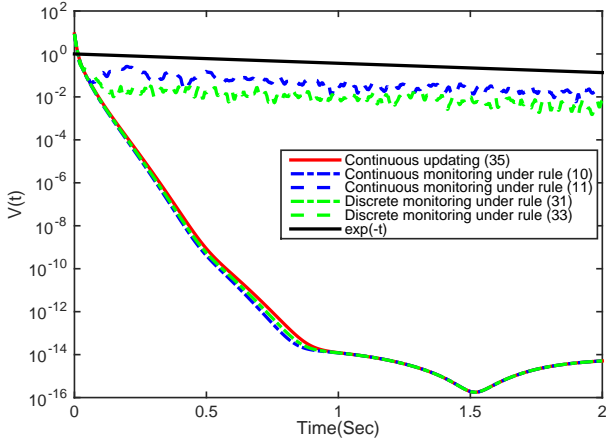}
\caption{Dynamics of $V(t)$ for systems with continuous updating, continuous monitoring under rules (\ref{event1}),(\ref{event2}), discrete monitoring under rules (\ref{event3}), (\ref{event4}).}
 \label{variation_v_all}
\end{figure}

\begin{figure}
\centering
\subfigure[under updating rules (\ref{event1}) and (\ref{event3})]{
\includegraphics[width=.5\textwidth]{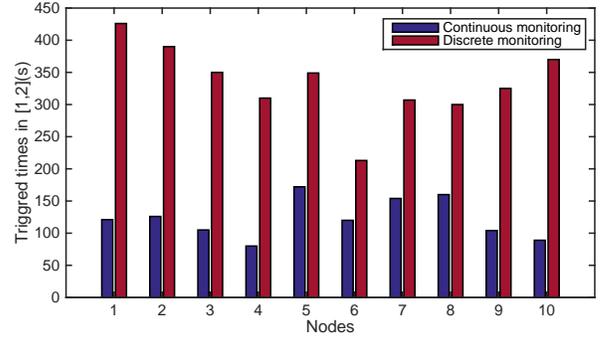}
\label{triggering_times_1}
}
\subfigure[under updating rules (\ref{event2}) and (\ref{event4})]{
\includegraphics[width=.5\textwidth]{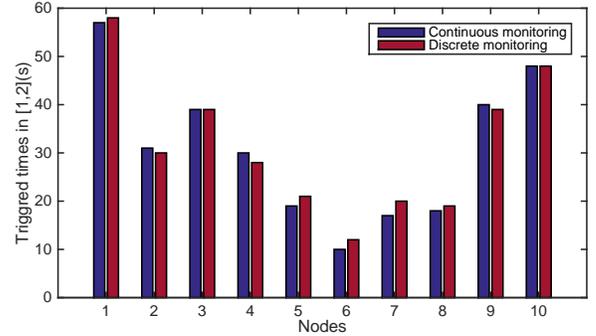}
\label{triggering_times_2}
}
\caption{Histogram of triggering times of each node in $[1,2]$(sec).}
\end{figure}

\section{conclusion}
In this paper, we employed event-triggered coupling configurations to
realize synchronization for linearly coupled dynamical systems. We studied
both continuous monitoring and discrete monitoring schemes: continuous
monitoring scheme means that each node collects its neighborhood's
instantaneous state, and discrete monitoring scheme means that each node
obtains its neighborhood's states at the event triggered time. The
event-triggered rules were proved to perform well and can exclude Zeno
behaviors, as proved for some cases and illustrated by simulations. We showed that there are
trade-offs between better performance in terms of fast convergence and less
updating time slots, and between more cost in observation of states and
more computation load of predicting next updating times. One step further,
there are a few issues, including how to estimate the number of updating
time slots and its dependence on the parameters in the rule and the
structure of network structure, which merits the future research.

\section*{Acknowledgement}
The authors are very grateful to reviewers for their useful comments
and suggestions.

\end{document}